\newtheorem{theorem}{Theorem}
\newtheorem{definition}[theorem]{Definition}
\newtheorem{proposition}[theorem]{Proposition}
\newtheorem{remark}[theorem]{Remark}
\newtheorem{example}[theorem]{Example}
\title{A Myhill-Nerode theorem for automata with advice}
\author{Alex Kruckman
\institute{University of California, Berkeley}
\email{kruckman@gmail.com}
\and
Sasha Rubin
\institute{TU Vienna and IST Austria}
\email{sasha.rubin@gmail.com}
\and
John Sheridan
\email{jhs223@cornell.edu}
\and
Ben Zax
\email{benzax@gmail.com}
}
\begin{document}

\maketitle

\begin{abstract}
An automaton with advice is a finite state automaton which has access to an additional fixed infinite string called {\em an advice tape}. We refine the Myhill-Nerode theorem to characterize the languages of finite strings that are accepted by automata with advice. We do the same for tree automata with advice.
\end{abstract}

\section{Introduction} \label{sec:intro}

%
%
Consider an extension of the classical model of finite automata operating on finite strings in which the machine simultaneously reads a fixed {\em advice tape} --- an infinite string $A$.  A {\em deterministic finite-string automaton with advice $A$} is like a deterministic finite-string automaton, except that at step $n$ the next state depends on the current state, the $n^{\text{th}}$ symbol of the input $w \in \Sigma^*$, and the $n^{\text{th}}$ symbol of $A \in \Gamma^\omega$. The automaton halts once all of $w$ has been read and accepts $w$ if and only if it is in a final state (in the {\em non-terminating} model of Section~\ref{sec:nonterminating}, the automaton reads the rest of the advice tape and accepts according to a Muller condition).

We now give a formal definition. 

\begin{definition} \label{dfn_autadvice}
An automaton with advice is a tuple $M = (Q,\Sigma,\Gamma,A,\delta,q_0,F)$.
\begin{enumerate}
\item $Q$ is the finite set of \emph{states} of the automaton.
\item $\Sigma$ is a finite set of symbols called the \emph{input alphabet}.
\item $\Gamma$ is a finite set of symbols called the \emph{advice alphabet}.
\item $A \in \Gamma^{\omega}$ is the \emph{advice string}.
\item $\delta: Q\times\Gamma\times\Sigma\to Q$ is the \emph{transition function}. 
\item $q_0\in Q$ is the \emph{initial state}.
\item $F \subseteq Q$ is the \emph{acceptance condition}.
\end{enumerate}

The \emph{run} of $M$ on a string $w\in\Sigma^*$ is a sequence of states $\alpha\in Q^{|w|+1}$ such that $\alpha_0 = q_0$ and for $1 \leq n \leq |w|$, $\alpha_n = \delta(\alpha_{n-1},A_n,w_n)$.

We say a string $w\in\Sigma^*$ is \emph{accepted} by the automaton with advice $M$ if the final state appearing the run of $M$ on $w$ is in $F$.

A language $L \subseteq \Sigma^*$ is {\em regular with advice $A$} if it is the language accepted by some automaton with advice $A$. A language $L$ is {\em regular with advice} if there exists $A$ such that $L$ is regular with advice $A$. Thus $L$  is {\em not regular with advice} means that there is no  $A$ such that $L$ is regular with advice $A$.
\end{definition}


What is the power of this model of computation? We make some trivial observations:
\begin{enumerate}
\item Every regular language is regular with advice (indeed, any advice will do).
\item Every unary language $L$ (ie. subset of $\{1\}^*$) is regular with advice (indeed, let $A$ be the characteristic sequence of $L$). 
\end{enumerate}
Is every language regular with advice? For a fixed advice $A$ the answer is clearly `no': there are continuum many languages and only countably many languages regular with advice $A$ (since there are countably many finite automata).

This simple argument does not preclude the possibility that for every language $L$ there is an advice $A_L$ such that $L$ is regular with advice $A_L$. Also note that the standard pumping and Myhill-Nerode arguments showing non-regularity do not apply in the presence of advice. We now sketch an alternative argument that in fact the language $\{0^n1^n : n \in \mathbb{N}\}$ is not regular with advice.

Recall that for a language $L$, the equivalence relation $\equiv_L$ on $\Sigma^*$, called the Myhill-Nerode congruence, is defined as follows: $x \equiv_L y$ if for all $z \in \Sigma^*$ it holds that $xz \in L \iff yz \in L$.
The classical Myhill-Nerode theorem states:

\begin{theorem}[Myhill-Nerode]
A language $L \subseteq \Sigma^*$ is regular if and only if $\equiv_{L}$ has finitely many equivalence classes.
\end{theorem}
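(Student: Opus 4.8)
The plan is to prove both implications via the standard correspondence between deterministic finite automata and finite-index equivalence relations on $\Sigma^*$ that are \emph{right congruences}, i.e.\ equivalence relations $\sim$ such that $x \sim y$ implies $xa \sim ya$ for every $a \in \Sigma$.

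\textbf{From regular to finite index.} Suppose $L$ is accepted by a deterministic finite automaton $M$ with state set $Q$, initial state $q_0$, final states $F$, and transition function $\delta$; write $\delta^*(q,w)$ for the state reached from $q$ after reading $w$. I would define $x \sim_M y$ iff $\delta^*(q_0,x) = \delta^*(q_0,y)$; this is an equivalence relation with at most $|Q|$ classes. Then I would check that $\sim_M$ refines $\equiv_L$: if $x \sim_M y$ then for every $z \in \Sigma^*$ we have $\delta^*(q_0,xz) = \delta^*(\delta^*(q_0,x),z) = \delta^*(\delta^*(q_0,y),z) = \delta^*(q_0,yz)$, so $xz \in L \iff yz \in L$, i.e.\ $x \equiv_L y$. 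Since $\sim_M$ refines $\equiv_L$ and $\sim_M$ has finite index, $\equiv_L$ has at most as many classes, hence finitely many.

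\textbf{From finite index to regular.} Suppose $\equiv_L$ has finitely many classes. I would build a deterministic finite automaton $M_L$ taking as its state set the (finite) set of $\equiv_L$-classes, with initial state $[\epsilon]$, accepting set $\{[x] : x \in L\}$, and transitions $\delta([x],a) = [xa]$. The point requiring care is well-definedness of this data: $\delta$ is well defined precisely because $\equiv_L$ is a right congruence, which follows immediately from the definition of $\equiv_L$ by prepending $a$ to the test suffixes ($x \equiv_L y$ and $(xa)z = x(az) \in L \iff y(az) = (ya)z \in L$), and the accepting set is well defined by the instance $z = \epsilon$ of that definition. A routine induction on $|w|$ then shows $\delta^*([\epsilon],w) = [w]$, so $M_L$ accepts $w$ iff $[w]$ is accepting iff $w \in L$; hence $L$ is regular.

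The only genuine obstacle is the well-definedness check in the second direction, and it is immediate from the shape of the Myhill--Nerode congruence; everything else reduces to a counting argument (a relation coarsening a finite-index relation has finite index) and an induction on word length.
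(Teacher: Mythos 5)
Your proof is correct and is the standard argument: the paper states this classical theorem without proof, and your two directions (a DFA's reaching-the-same-state relation refines $\equiv_L$, hence finite index; conversely the quotient automaton on $\equiv_L$-classes, with well-definedness of transitions and acceptance coming from the right-congruence property and the $z=\epsilon$ instance) are exactly the blueprint the paper adapts in Section~\ref{sec:m-n} for the advice version, where the state-counting direction is carried out per length $n$ and the quotient construction is replaced by identifying the classes of each $\equiv_{L,n}$ with a fixed finite state set and encoding the transition maps $h_n$ into the advice tape. Nothing is missing; the well-definedness check you flag is indeed the only point requiring care.
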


The proof of the Myhill-Nerode theorem for classical automata suggests the following observation regarding automata with advice:

Let $M$ be an automaton with advice which accepts the language $L$ with some advice $A$. Suppose that $M$, starting in the initial state, reaches the same state on input $x$ as on input $y$. If $M$ is also at the same place in the advice tape $A$ after reading $x$ and $y$, that is, if $x$ and $y$ have the same length, then $x \equiv_L y$. Thus, for every $n$, the number of classes of $\equiv_{L}$ restricted to $\Sigma^n$ (the strings in $\Sigma^*$ of length exactly $n$) is at most the number of states in $M$.

For language $L$ and integer $n$ write $\equiv_{L,n}$ for the equivalence relation $\equiv_L$ restricted to $\Sigma^n$.  If there exists $n$ such that the number of equivalence classes of $\equiv_{L,n}$ is $k$, then no automaton with advice having fewer than $k$ states accepts $L$.

We now have a way to prove that certain languages are not regular with advice. Consider $L :=\{0^n1^n : n \in \mathbb{N}\}$ and note that for every $n$, no pair of strings in the set
$X_n := \{0^a 1^{n-a} : \frac{n}{2} \leq a \leq n\}$ are $\equiv_{L,n}$-equivalent. But the size of $X_n$ is unbounded as $n$ grows. Thus $L$ is not regular with advice.

The observation above gives one direction of a Myhill-Nerode like characterization. We prove the other direction in Section \ref{sec:m-n}. The role of the size of the alphabet $\Gamma$ is considered in Section \ref{sec:alphabet}. A variation of the model --- which we call {\em non-terminating} ---  in which the automaton reads the rest of the advice is considered in Section
\ref{sec:nonterminating}. 
Finally, we mention in Section~\ref{sec:trees} that the results go through for tree automata with a fixed infinite tree as advice.

\subsection*{Related work}

%

Automata over finite words can be identified with weak monadic second-order formulas over the structure $(\mathbb{N},\mathrm{succ})$. Non-terminating automata with advice correspond to WMSO formulas over expansions of $(\mathbb{N},\mathrm{succ})$ by unary predicates $\bar{P}$ (this is implicit in \cite{ElRa66,CaTh02,RaTh06, Bara06Hierarchy}). Questions of logical decidability are equivalent to the $\bar{P}$-acceptance problem: given a Muller automaton $M$, decide whether or not $M$ accepts $\bar{P}$. Similar things are done for automata operating on finite trees \cite{Fratani}[Definition $11$] and \cite{CoLo07}.



\def\Q{\mathbb{Q}}

It is easy to see that the languages recognized by non-terminating automata with advice are closed under logical operations such as union, complementation, projection, permutation of co-ordinates and instantiation (see for instance \cite{Fratani}). Consequently, one may define {\em automatic structures with advice} (\cite{CoLo07}). These are relational structures whose domain and atomic relations are recognized by automata with advice. The case without advice is well studied, and such structures have decidable first-orer theory (see \cite{Rubi08}). The main programme there has been to supply techniques for showing non-automaticity.
For instance, there is a difficult proof of the fact that $(\Q,+)$ is not automatic without advice (\cite{Tsan11}). However it is automatic with advice (communicated by Frank Stephan and Joe Miller, and reported in \cite{Nies07}). Because this example is not yet well known we present it here.

\begin{example}
To simplify exposition, we give a presentation of $([0,1) \cap \Q,+)$. Each rational is coded by a finite string
over the alphabet $\{0,1,\#\}$. Automata for the domain and the addition will have access to the advice string
\[
A = 10\#11\#100\#101\#110\#111\#1000\#\cdots
\]
which is a version of the Champernowne-Smarandache string. 
To every rational $q$ in $[0,1)$ there is a unique {\em finite}
sequence of integers  $a_1 \cdots a_n$ such that $0 \leq a_i < i$, $q = \sum_{i=2}^n \frac{a_i}{i!}$, and $n$ is minimal. The presentation codes this rational as
$f(a_2)\#f(a_3)\#f(a_4) \cdots \#f(a_n)$ where $f$ sends $a_i$ to the binary string of length $\lceil  \log_2 i \rceil +1$  representing $a_i$. Addition $a+b$ is performed least significant digit first (right to left) based on the fact that 
\[
\frac{a_i + b_i + c}{i!} = \frac{1}{(i-1)!} + \frac{a_i + b_i + c - i}{i!}
\]
where $c \in \{0,1\}$ is the carry in. In other words, if $a_i+b_i+c \geq i$ then write $a_i+b_i+c-i$ in the $i$th segment and carry a $1$ into the $(i-1)$st segment; and if $a_i +b_i + c < i$ then write this under the $i$th segment and carry a $0$ into the $(i-1)$st segment.  These comparisons and additions can be performed since the advice tape is storing $i$ in the same segment as $a_i$ and $b_i$. 
\end{example}

We remark that the advice string $A$ above has decidable acceptance problem (\cite{Bara06Hierarchy}). Consequently every structure that is automatic with this advice has decidable first-order theory.

We end with a question that we hope will spur interest: what are other interesting examples of structures that are automatic with advice?

\subsection*{Acknowledgements} 
Research supported by the National Science Foundation through the Research Experiences for Undergraduates (REU) program at Cornell University, the European Science Foundation for the activity entitled `Games for Design and Verification', FWF NFN Nr. S11407-N23 (RiSE), ERC 279307: Graph Games, PROSEED WWTF Nr. ICT 10-050, and  ARISE FWF Nr. S11403-N23.

\section{The Myhill-Nerode Theorem} \label{sec:m-n}

Let $\Sigma$ be a finite alphabet and $L\subseteq \Sigma^*$ a language. Define an equivalence relation $\equiv_{L,n}$ on $\Sigma^n$, the set of all strings of length $n$, by $x\equiv_{L,n} y$ if for all $z \in \Sigma^*$ it holds that $xz \in L \iff yz \in L$.
This is the usual Myhill-Nerode congruence restricted to strings of length $n$.

\begin{theorem}[Myhill-Nerode theorem with advice]
A language $L \subseteq \Sigma^*$ is regular with advice if and only if there is some 
$k \in \mathbb{N}$ such that for every $n \geq 0$, $\equiv_{L,n}$ has at most $k$ equivalence classes.
\end{theorem}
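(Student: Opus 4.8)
The forward (``only if'') direction is already in hand: it is exactly the observation of Section~\ref{sec:intro} that if some automaton with advice recognizing $L$ has state set $Q$, then two inputs of the same length $n$ that reach the same state are $\equiv_{L,n}$-equivalent, so $\equiv_{L,n}$ has at most $|Q|$ classes for every $n$; take $k=|Q|$. So all the work is in the converse: assuming $\equiv_{L,n}$ has at most $k$ classes for every $n\ge 0$, we must exhibit an advice string $A$ and an automaton with advice $A$ recognizing $L$. The plan is to use the Myhill--Nerode classes themselves (suitably named) as states, and to let the advice symbol read at step $n$ carry the ``transition table'' that pushes level-$(n-1)$ class names forward to level-$n$ class names; the automaton then merely looks up and applies this table.

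Two elementary facts underlie the construction. First, $\equiv_{L,\cdot}$ is closed under appending a letter: if $x\equiv_{L,n}y$ then $x\sigma\equiv_{L,n+1}y\sigma$ for every $\sigma\in\Sigma$ (apply the definition of $\equiv_{L,n}$ to test strings of the form $\sigma z$). Second, every $\equiv_{L,n}$-class is either contained in $L$ or disjoint from it (apply the definition to the test string $\varepsilon$). Hence appending $\sigma$ induces a well-defined map from $\equiv_{L,n}$-classes to $\equiv_{L,n+1}$-classes, and ``being contained in $L$'' is a well-defined property of classes.

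Fix, for each $n$, an injection $\nu_n$ from the set of $\equiv_{L,n}$-classes into $\{1,\dots,k\}\times\{0,1\}$ that sends a class $C$ to a pair whose second coordinate is $1$ precisely when $C\subseteq L$; this exists because there are at most $k$ classes at level $n$ altogether, hence at most $k$ of each of the two types. Put $Q=\{1,\dots,k\}\times\{0,1\}$, $F=\{1,\dots,k\}\times\{1\}$, and $q_0=\nu_0(\{\varepsilon\})$. Let $\Gamma$ be the (finite) set of all functions $g\colon Q\times\Sigma\to Q$, set $\delta(q,g,\sigma)=g(q,\sigma)$, and take $A$ to be the advice string whose $n$th symbol $A_n$ is the function that, on an argument $(\nu_{n-1}(C),\sigma)$ with $C$ an $\equiv_{L,n-1}$-class, returns $\nu_n(C')$, where $C'$ is the $\equiv_{L,n}$-class of $w\sigma$ for any $w\in C$ (well-defined by the first fact), and is arbitrary on other arguments. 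A straightforward induction on $|w|$ shows that the run $\alpha$ of this automaton on $w$ ends with $\alpha_{|w|}=\nu_{|w|}(C_w)$, where $C_w$ is the $\equiv_{L,|w|}$-class of $w$: the base case is the definition of $q_0$ and the inductive step is immediate from the definitions of $\delta$ and $A_n$. Therefore $w$ is accepted iff $\alpha_{|w|}\in F$ iff $C_w\subseteq L$ iff $w\in L$, so $L$ is regular with advice.

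There is no deep obstacle here; the one point that needs care --- and where the more naive attempt (use $k$ states named by the classes directly) breaks --- is that the acceptance set $F$ is a \emph{single} subset of $Q$, yet it must be the set of names of accepting classes at \emph{every} level simultaneously, while which classes are accepting varies from level to level. Recording an accept/reject bit in the state resolves this, at the cost of $2k$ states instead of $k$; that $k$ states genuinely need not suffice is witnessed by $L=\{w\in\Sigma^*:|w|\text{ is even}\}$, which has a single class at every level but requires a parity counter, hence at least two states.
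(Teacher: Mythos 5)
Your proof is correct and follows essentially the same route as the paper: the same pigeonhole argument on states for the forward direction, and for the converse the same construction with $2k$ states split into accepting and rejecting halves (your $\{1,\dots,k\}\times\{0,1\}$ is just a renaming of the paper's $\{1,\dots,2k\}$ with $F=\{1,\dots,k\}$), an advice alphabet consisting of all transition tables $Q\times\Sigma\to Q$, and the $n$th advice symbol encoding the well-defined map from level-$(n-1)$ classes to level-$n$ classes, verified by induction on the input length. Your closing remark that $k$ states alone cannot suffice (witnessed by the even-length language) is a correct observation the paper only hints at implicitly.
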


\begin{proof}
Suppose $L$ is regular with advice. Let $M=(Q,\Sigma,\Gamma,A,\delta,q_0,F)$ be the automaton recognizing $L$, and let $k = |Q|$. Now assume
that for some $n$, $\equiv_{L,n}$ divides the strings of length $n$ into $l$
equivalence classes, with $l>k$. Pick representative strings $x_1,\hdots,x_l$ in these classes. Let $q_i\in Q$ be the $(n+1)^{\text{st}}$ state in the run of $M$ on
$x_i$, that is, the state reached after reading the final character of $x_i$. Since $l > |Q|$, $q_i = q_j$ for some $i\neq j$. Then
for all $z\in\Sigma^*$, $M$ accepts $x_iz$ if and only if $M$ accepts $x_jz$,
since the run of $M$ on these two strings is the same after stage $n$. This contradicts the assumption that $x_i$ and $x_j$ are representatives of distinct $\equiv_{L,n}$- classes.

Conversely, suppose we have such a bound $k$. For each $n\in \mathbb{N}$, let $\mathcal{C}_n$ be the collection of equivalence classes of $\equiv_{L,n}$, so $|\mathcal{C}_n|\leq k$. 

We must construct an automaton $M$ recognizing $L$. Let $Q = \{1,\dots,2k\}$ be the set of states, and let $F = \{1,\dots,k\}\subset Q$. The idea is that we have enough room to represent each equivalence class by an accepting or a rejecting state as necessary at each stage. Set $q_0 = 1$ if the empty string is in $L$ and $q_0 = k+1$ otherwise. 

For $x,y\in \Sigma^n$, if $x\equiv_{L,n}y$, then $x\in L$ if and only if $y\in L$ (appending the empty
string as a suffix). Thus, $\mathcal{C}_n$ is partitioned into those classes which are  ``accepting'' and those which are not. Also if $x\equiv_{L,n} y$, then for all $a\in\Sigma$, $xa\equiv_{L,n+1}ya$, since for all $z\in\Sigma^*$,
$xaz\in L$ if and only if $yaz\in L$. This defines a function
$h_n:\mathcal{C}_n\times\Sigma\rightarrow\mathcal{C}_{n+1}$ so that if $C\in \mathcal{C}_n$ and $x$ is a string in $C$, then $xa$ is a string in the class $h_n(C,a)$. 

For each $n>0$, identify the ``accepting'' classes with states from $\{1,\dots,k\}$ and the remaining classes with states from $\{k+1,\dots,2k\}$. The remaining work is to encode the transition information given by the functions $h_n$ into the advice tape.

Enumerate all functions $Q\times \Sigma \rightarrow Q$ by $\langle f_i\rangle_{i = 1}^N$, where $N = (2k)^{2k|\Sigma|}$, and let $\Gamma = \langle c_i\rangle_{i = 1}^N$ be the advice alphabet. Each character codes a possible transition behavior. For each $n\in\mathbb{N}$, pick a function $f_i$ which respects $h_n$ in the
sense that if a class $C\in \mathcal{C}_n$ is associated to the state $j$, then for any character $a\in \Sigma$, $f_i(j,a)$ is the state associated to $h_n(C,a)$. Since not every state is associated to a class, $f_i$ may behave arbitrarily on some inputs. Set the $n^{\text{th}}$ character of the advice tape $A$ to be $c_i$. 

Finally, we define the transition function $\delta : Q\times \Gamma\times \Sigma\rightarrow Q$ by
$\delta(j,c_i,a) = f_i(j,a)$. It is easy to check by induction on length that $M$ accepts the string $x$ if and only if $x\in L$.
\end{proof}


\section{On the role of alphabet size}\label{sec:alphabet}
In our proof of the Myhill-Nerode theorem with advice, we made use of a large advice alphabet. This raises the question of whether the size of the advice alphabet is essential.

\begin{quote}
Does there exist $k$ such that if $L$ is regular with advice, then already $L$ is regular with some advice over an alphabet of size $k$?
\end{quote}

The answer is `no'. For $k \in \mathbb{N}$ let $\mathrm{REGA}_k$ be the set of languages that are regular in some advice with advice alphabet of size $k$. Then $\mathrm{REGA}_1$ are the regular languages, and $\mathrm{REGA}_k \subseteq \mathrm{REGA}_{k+1}$. We prove that
$\mathrm{REGA}_k \neq \mathrm{REGA}_{k+1}$ for all $k \in \mathbb{N}$.

For $A \in \{0,\dots,k\}^\omega$, let $\text{Pref}(A)$ be the language consisting of all prefixes (initial segments) of $A$. The set $\text{Pref}(A)$ is clearly regular with advice $A$, but if we choose $A$ carefully, then $\text{Pref}(A)$ is not regular with any advice $B\in \{0,\dots,k-1\}^\omega$.

%
%

To simplify the proof, we will change the question to one about deterministic transducers. A \emph{deterministic transducer} is a machine $M$ which reads an infinite input string $B$ and produces an infinite output string $A$. We denote this by $M[B] = A$. At each stage, $M$ produces an output character based on an input character and its current state.

If we have an automaton $M$ recognizing $\text{Pref}(A)$ with advice $B$, we can transform it into a deterministic transducer $M^*$ such that $M^*[B] = A$. Note that on the run of $M$ on the (infinite) input string $A$, $M$ is always in an accepting state, and for each state and advice character pair $(q,b)$ occuring in the run, there is exactly one input character $a$ (the next character of $A$) such that $\delta(q,b,a)$ is an accepting state. Without changing the language accepted by $M$, we may adjust the transition function so that for each pair of an accepting state $q\in F$ and an advice character $b$ (even those pairs not appearing in the run of $M$ on $A$), there is exactly one input character $a$ such that $\delta(q,b,a)$ is an accepting state. 

Now we define the deterministic transducer $M^*$ so that it reads the advice string and at each stage produces the unique acceptable input character as output. That is, $M^*$ has the same state set as $M$, and if $M^*$ reads the advice character $b$ in state $q$, it produces the output character $a$ as above and transitions to $\delta(q,b,a)$.

%
%
\begin{proposition}
There exists an infinite string $A$ over alphabet $\Sigma = \{0, \dots, k\}$ such that for every deterministic transducer $M$ (with input alphabet $\Gamma = \{0, \dots, k-1\}$ and output alphabet $\Sigma$) and every infinite string $B$ over $\Gamma$, $M[B] \neq A$.
\end{proposition}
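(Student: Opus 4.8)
The plan is a counting/diagonalization argument against all transducers simultaneously. Fix the alphabets $\Gamma = \{0,\dots,k-1\}$ and $\Sigma = \{0,\dots,k\}$. I would build $A \in \Sigma^\omega$ as a carefully chosen infinite concatenation of finite blocks, ensuring that for each deterministic transducer $M$, and each input $B \in \Gamma^\omega$, the output $M[B]$ differs from $A$ somewhere. The intuition is entropy: the input alphabet has $k$ symbols while the output alphabet has $k+1$, so a transducer with a fixed number of states cannot keep up with an output string that is ``sufficiently generic'' over the larger alphabet. The concrete way to exploit this is to count strings. For a transducer $M$ with state set $Q$, the number of distinct output strings of length $n$ that $M$ can produce, ranging over all input prefixes in $\Gamma^n$, is at most $|Q| \cdot k^n$ (the run is determined by the starting state together with the input, but to be safe count $|Q| \cdot k^n$). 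Meanwhile there are $(k+1)^n$ strings in $\Sigma^n$. So for $n$ large relative to $|Q|$, most strings of length $n$ over $\Sigma$ are \emph{not} producible as an output of $M$ on any input.

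The first step is to make this work for a single transducer $M$: choose a finite string $s$ over $\Sigma$ of length $n$ large enough that $s$ is not the output of $M$ on any input prefix of length $n$; such $s$ exists by the counting bound once $(k+1)^n > |Q| \cdot k^n$. Then any $A$ having $s$ as a prefix satisfies $M[B] \ne A$ for all $B$, since the length-$n$ prefixes already disagree. The second step is to handle \emph{all} transducers at once by a back-and-forth construction. Enumerate the deterministic transducers $M_1, M_2, \dots$ (there are countably many, up to the relevant data). Build $A$ as a limit of finite prefixes $w_0 \sqsubseteq w_1 \sqsubseteq \cdots$: having built $w_{i-1}$, append a block to get $w_i$ so that $w_i$ (now of some length $n_i$) is not an output of $M_i$ on \emph{any} input prefix of length $n_i$. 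Concretely, among all extensions of $w_{i-1}$ to length $n_i$ (there are $(k+1)^{n_i - |w_{i-1}|}$ of them), at most $|Q_i| \cdot k^{n_i}$ are producible by $M_i$, so for $n_i$ large enough there is an extension $w_i$ that is not; pick it. Then set $A = \bigcup_i w_i$. For each $i$, the prefix $w_i$ of $A$ is not producible by $M_i$ on any input of the right length, hence $M_i[B]$ differs from $A$ within the first $n_i$ characters, for every $B$. Since every transducer appears in the enumeration, this gives the proposition.

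One point requiring a little care: I want the bound on producible outputs to hold uniformly when I only count \emph{extensions} of a fixed prefix $w_{i-1}$. This is automatic, since the set of length-$n_i$ outputs of $M_i$ extending $w_{i-1}$ is a subset of all length-$n_i$ outputs of $M_i$, which has size at most $|Q_i|\cdot k^{n_i}$; I just need $(k+1)^{n_i}$, or more precisely the number of available extensions $(k+1)^{n_i - |w_{i-1}|}$, to exceed $|Q_i|\cdot k^{n_i}$, and since $|w_{i-1}|$ is already fixed, this holds for all sufficiently large $n_i$ because $(k+1)^{n_i}/k^{n_i} \to \infty$. Another point: the enumeration of transducers should be taken over a countable set of relevant machines — e.g., all $(Q,\delta,q_0)$ with $Q$ a finite initial segment of $\mathbb{N}$ — which suffices since any transducer is isomorphic to one of these, and isomorphic transducers produce the same outputs.

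I expect the main obstacle to be purely expository rather than mathematical: stating the counting bound on the number of length-$n$ outputs of a transducer precisely (accounting for the fact that the output at each step depends only on the current state and current input symbol, so the whole output of length $n$ is determined by the pair consisting of the start state and the length-$n$ input, giving the bound $|Q|\cdot k^n$), and organizing the block construction so that it is clear $A$ is well-defined and every transducer is eventually defeated. The inequality $(k+1)^n > |Q|\cdot k^n$ holding for large $n$ is the entire quantitative content, and it is elementary.
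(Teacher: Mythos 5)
Your proposal is correct, but it is a genuinely different argument from the paper's. You defeat each transducer by a counting (entropy) argument at the level of prefixes: since a deterministic transducer can produce at most $|Q|\cdot k^{n}$ (indeed at most $k^{n}$, as the start state is fixed) distinct outputs of length $n$, while a fixed prefix $w_{i-1}$ has $(k+1)^{n-|w_{i-1}|}$ extensions of length $n$, for large $n$ some extension is not producible, and you diagonalize along an enumeration of transducers; the inequality $(k+1)^{n-m} > |Q|\,k^{n}$ for large $n$ (with $m$, $|Q|$ fixed and $k\ge 1$) is exactly the quantitative content, and your handling of the enumeration up to isomorphism is fine. The paper instead constructs, for each transducer $M$, a single finite word $u_M$ that cannot occur as a \emph{substring} of $M[B]$ for any $B$ and from \emph{any} state: by pigeonhole, among the $k|Q|$ output letters $M^q[b]$ some letter of the $(k+1)$-letter output alphabet occurs at most $\frac{k}{k+1}|Q|$ times, so forcing that output letter shrinks the set of possible current states by a constant factor; iterating empties the state set after finitely many letters. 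The advice string is then simply the concatenation of all the words $u_M$, with no need to track prefix lengths or alignments, because the forbidden-factor property is position- and state-independent. Your route is arguably more elementary (pure cardinality counting plus a standard back-and-forth prefix construction), while the paper's buys a stronger structural fact (an explicit word avoided by every output of $M$ started in any state), which is what makes the naive concatenation work. Both proofs exploit the same $k$ versus $k+1$ gap, and both are complete.
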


\begin{proof}
We diagonalize. For each deterministic transducer $M$ with input alphabet $\Gamma$ and output alphabet $\Sigma$, we will produce a finite string $u_M$ from $\Sigma$ such that $u_M$ does not appear as a substring of $M[B]$ for any string $B\in\Gamma^\omega$.

Let $Q$ be the states of $M$. For $q\in Q$ and $b\in \Gamma$, write $M^q[b]$ for the character in $\Sigma$ produced by $M$ in state $q$ upon reading the character $b$. Among the $k|Q|$ characters $M^q[b]$ (parametrized by $b \in \Gamma$ and $q \in Q$), there must be a character, say $a \in \Sigma$, that occurs at most $\frac{k}{k+1}|Q|$ times. Let $a$ be the first character of $u_M$. Let $Q'$ be those states $q'\in Q$ such that for some $b \in \Gamma$ and $q \in Q$, $M^q[b] = a$ and $M$ transitions to $q'$. Note that $|Q'| \leq \frac{k}{k+1}|Q| < |Q|$, so $Q'$ is a proper subset of $Q$. Replace $Q$ by $Q'$ and repeat to get the next character of $u_M$ and a new set of states $Q''\subset Q'$. After a finite number of steps, we will have $|Q^{(n)}| = 0$, at which point we have finished constructing $u_M$. 

Now we claim that $u_M$ cannot appear as a substring of $M[B]$ for any $B\in \Gamma^\omega$. Suppose for contradiction that it does appear in the run of $M$ on $B$. At the step in which $M$ begins producing $u_M$, it is in some state $q\in Q$. It reads a character from $B$, outputs the first character of $u_M$, and transitions into a state in $Q'$. After the next step, $M$ transitions into a state in $Q''$. After $n$ steps, regardless of the  content of $B$, $M$ must be in a state in $Q^{(n)}$. But $Q^{(n)}$ is empty.

Concatenating the countably many strings $u_M$, we obtain our string $A$. For every deterministic transducer $M$ and every infinite string $B$, $M[B]\neq A$, witnessed by the presence of $u_M$ as a substring of $A$.
\end{proof}

\section{Nonterminating automata with advice} \label{sec:nonterminating}

In the introduction we defined an automaton with advice as one that terminates after reading its finite input string (Definition~\ref{dfn_autadvice}). In this section, if $L$ is regular with advice $A$ we will say that $L$ is  {\em terminating regular with advice $A$}. There is another way to define automaton with advice, in what we call the non-terminating model. Here automata are defined with an infinitary acceptance condition. After the input string ends, the automaton continues to read the advice, producing an infinite run. The run is successful if it satifies the acceptance condition. We work with deterministic Muller automata, in which the input string is accepted if the set of states visited infinitely often is in the collection $F$ of accepting sets of states. 

\begin{definition}
A nonterminating automaton with advice is a tuple $M = (Q,\Sigma,\Gamma,A,\delta,q_0,F)$, with data just as before, except: 
\begin{itemize} 
\item $\delta: Q\times\Gamma\times(\Sigma\cup\{\Box\})\to Q$ is the \emph{transition function}. Here blank ($\Box$) is a new symbol that is read once the input string has ended.
\item $F \subseteq \mathcal{P}(Q)$ is the \emph{acceptance condition}, where $\mathcal{P}(Q)$ is the powerset of $Q$.
\end{itemize}

The \emph{run} of $M$ on a string $w\in\Sigma^*$ is an infinite sequence of states $\alpha\in Q^{\omega}$ such that $\alpha_0 = q_0$ and:
\begin{itemize}
\item For $1 \leq n\leq|w|$: $\alpha_{n} = \delta(\alpha_{n-1},A_n,w_n)$.
\item For $n>|w|$: $\alpha_{n} = \delta(\alpha_{n-1},A_n,\Box)$.
\end{itemize}
We say a string $w\in\Sigma^*$ is \emph{accepted} by an automaton with advice $M$ if the set of states that appear infinitely often in $\alpha$ is an element of the acceptance condition $F$.

A language $L$ is {\em non-terminating regular with advice $A$} if it is the language accepted by some non-terminating automaton with advice $A$. A language $L$ is {\em non-terminating regular with advice} if there exists $A$ such that $L$ is non-terminating regular with advice $A$.
\end{definition}

The distinction between the terminating and nonterminating models was irrelevant for our discussion of Myhill-Nerode because these models are equivalent if we do not fix the advice:

\begin{proposition} \label{prop:eqmods}
For every language $L$, there exists advice $A$ such that $L$ is terminating regular with advice $A$ if and only if there exists advice $B$ such that $L$ is non-terminating regular with advice $B$.
\end{proposition}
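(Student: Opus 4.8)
The plan is to prove both directions of the equivalence. One direction is essentially trivial, and the other requires a small simulation argument combined with the Myhill-Nerode theorem with advice.

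\medskip

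\noindent\textbf{Easy direction.} First I would show that if $L$ is terminating regular with advice $A$, then $L$ is non-terminating regular with advice $A$ as well (so in fact we may take $B = A$). Given a terminating automaton $M = (Q,\Sigma,\Gamma,A,\delta,q_0,F)$ recognizing $L$, extend it to a non-terminating automaton by adding two fresh sink states $q_{\mathrm{acc}}$ and $q_{\mathrm{rej}}$. On reading the blank symbol $\Box$ for the first time, transition from a state $q$ to $q_{\mathrm{acc}}$ if $q \in F$ and to $q_{\mathrm{rej}}$ otherwise; thereafter stay in the sink regardless of advice. Set the Muller acceptance condition to be $F' = \{\{q_{\mathrm{acc}}\}\}$. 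Then the run on $w$ eventually loops forever on $q_{\mathrm{acc}}$ iff the terminating run ended in $F$, i.e.\ iff $w \in L$. This is a routine check.

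\medskip

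\noindent\textbf{Hard direction.} Now suppose $L$ is non-terminating regular with advice $B$, say via a Muller automaton $M$. I want to produce (possibly different) advice $A$ and a \emph{terminating} automaton recognizing $L$ with advice $A$. The cleanest route is to invoke the Myhill-Nerode theorem with advice already proved: it suffices to show there is a $k$ such that for every $n$, $\equiv_{L,n}$ has at most $k$ classes. For this I would run the argument exactly as in the forward direction of that theorem, but adapted to the Muller model. If $x, y \in \Sigma^n$ drive $M$ (with advice $B$) to the same state after reading their $n$-th input character, then because $x$ and $y$ have the same length, the suffix of the computation on $xz$ and on $yz$ — reading the remaining input $z$ and then the blanks, all against the same tail of $B$ starting at position $n+1$ — is literally identical. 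Hence the infinitely-often set is the same, so $xz \in L \iff yz \in L$, i.e.\ $x \equiv_{L,n} y$. Therefore $\equiv_{L,n}$ has at most $|Q|$ classes for every $n$, and the Myhill-Nerode theorem with advice gives a terminating automaton for $L$ with some advice $A$.

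\medskip

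\noindent I expect the main (though still minor) obstacle to be stating the simulation-invariance carefully in the Muller setting: one must be clear that what matters for acceptance is the \emph{state} reached at position $n$ together with the \emph{suffix} $B_{n+1}B_{n+2}\cdots$ of the advice, and that two inputs of equal length $n$ reaching the same state are therefore indistinguishable by any continuation. Once that is pinned down, the rest is a direct appeal to the already-established Myhill-Nerode theorem with advice, and no new combinatorial work is needed.
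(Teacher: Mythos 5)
Your proof is correct, but your hard direction takes a genuinely different route from the paper's. For non-terminating $\Rightarrow$ terminating, the paper does not pass through the Myhill--Nerode theorem at all: it directly builds new advice over an expanded alphabet, whose $n$-th letter is the pair $(A_n, f_n)$ where $f_n \subseteq Q$ is the set of states $q$ from which the Muller automaton, started at position $n+1$ of the original advice and reading only blanks, would accept. A terminating automaton can then simulate the Muller automaton on the input and, using the component $f_n$ read at the last step, decide membership on the spot. Your argument instead bounds the index of $\equiv_{L,n}$ by $|Q|$ (two equal-length inputs reaching the same state are indistinguishable by any continuation, since the state at position $n$ together with the advice tail determines the rest of the run, hence the infinitely-often set) and then invokes the already-proved Myhill--Nerode theorem with advice, which is stated for the terminating model, so the appeal is legitimate and the reasoning goes through. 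What each buys: your reduction is shorter and reuses existing machinery, at the cost of losing any relationship between the two advice strings (the advice produced by the Myhill--Nerode construction encodes transition tables and bears no structural relation to $B$); the paper's construction is self-contained and yields advice that is just a letter-by-letter enrichment of the original $B$, which is a more informative transformation. Your easy direction is essentially the paper's (loop forever after the input ends, accepting iff the terminating run ended in $F$), merely spelled out with explicit sink states.
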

\begin{proof}
Given a terminating regular language, it is clearly also nonterminating regular with the same advice: the automaton can simply ignore the remainder of the advice, looping forever on the final state.

In the other direction, suppose $M = (Q,\Sigma,\Gamma,A,\delta,q_0,F)$ is a non-terminating automaton accepting $L$. Let $\alpha$ be the run of $M$ on input $w$. Whether or not $\alpha$ is successful depends on
whether or not $M$ starting in state $\alpha_{|w|}$ and at position $n+1$ of the advice tape accepts the empty string.  This  information can be encoded in the advice if we expand the advice alphabet.  Formally, let $B$ be the infinite string whose $n^{th}$ letter is the pair $(A_n,f_n)$, where $A_n$ is the $n^{th}$ letter of the original advice $A$ and $f_n \subseteq Q$ consists of all states $q$ such that automaton $(Q,\Sigma,\Gamma,A[n+1,\infty),\delta,q,F)$ starting in state $q$ and position $n+1$ in the adivce $A$ accepts the empty string. Then $L$ is terminating regular with advice $A$.
%
\end{proof}

If $L_T(A)$ is the class of languages terminating regular with advice $A$ and $L_N(A)$ is the class of languages nonterminating-regular with advice $A$, then $L_T(A) \subseteq L_N(A)$.
However, the models are not equivalent for certain advice strings:

\begin{proposition}\label{prop:TvsNT}
There exists advice $A$ such that $L_T(A) \neq L_N(A)$.
\end{proposition}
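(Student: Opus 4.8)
The plan is to exhibit an advice string $A$ for which $L_N(A)$ contains a language that $L_T(A)$ provably cannot, using a cardinality/combinatorial obstruction rather than a diagonalization against every advice. The key point is that terminating automata with a \emph{fixed} advice $A$ satisfy the length-restricted Myhill-Nerode bound of Section~\ref{sec:m-n} (the number of $\equiv_{L,n}$-classes is at most $|Q|$ for each $n$), whereas nonterminating automata with a fixed advice do not: a nonterminating automaton can, after the input ends, keep reading the advice and so ``defer'' its decision, effectively using the tail $A[n+1,\infty)$ to distinguish strings of length $n$. So I would look for a single $A$ and a language $L = L_N(A) \setminus L_T(A)$ witness of the form ``$w$ is accepted iff some simple predicate of $|w|$ and the tail of $A$ past position $|w|$ holds.''

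Concretely, first I would fix a specific advice $A \in \{0,1\}^\omega$ that is complicated enough to carry unboundedly much information in its tails but over which a nonterminating automaton can still compute the relevant predicate. A natural choice: let $A$ encode, in successive blocks, the characteristic function of some set or a sequence of increasing ``landmarks,'' arranged so that from position $n$ onward the automaton can detect a distinguished pattern. Define $L_A = \{\, 0^n : \text{the first occurrence of pattern } P \text{ in } A \text{ strictly after position } n \text{ has some parity/colour bit } = 1 \,\}$, or a similar predicate. Second, I would verify $L_A \in L_N(A)$: a nonterminating Muller automaton reads the input $0^n$, then on the $\Box$-suffix scans the advice tape for the next occurrence of $P$, records the relevant bit, and then loops forever in an accepting or rejecting sink accordingly — this is a finite-state computation because ``scan for the next $P$'' needs only bounded memory, and the Muller condition just reads off the final sink. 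Third, and this is the crux, I would show $L_A \notin L_T(A)$ by choosing the predicate so that the induced equivalence relation $\equiv_{L_A,n}$ has unboundedly many classes as $n$ varies — actually, since $L_A \subseteq \{0\}^*$ here, I should instead work over a two-letter input alphabet so that distinct strings of the same length can land in different classes, and arrange that for infinitely many $n$ the number of $\equiv_{L_A,n}$-classes exceeds any fixed bound. By the (easy direction of the) Myhill-Nerode theorem with advice, no terminating automaton over \emph{any} advice — in particular not $A$ — can accept $L_A$, giving $L_A \in L_N(A) \setminus L_T(A)$.

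The main obstacle is the simultaneous tension in the two requirements: to keep $L_A$ in $L_N(A)$ the post-input computation must be genuinely finite-state in the advice tail, so the predicate cannot be ``too global,'' yet to push $L_A$ out of $L_T(A)$ I need the length-$n$ fibre of $L_A$ to be rich enough that $\equiv_{L_A,n}$ has many classes for infinitely many $n$, which forces $L_A$ to depend delicately on the input of length $n$ as well as on the tail of $A$. The resolution I expect to pursue is to let $L_A$ be something like $\{\, w \in \{0,1\}^* : \text{the binary value of } w \text{ equals the distance in } A \text{ from position } |w| \text{ to the next marker} \,\}$, or a variant where the advice tail encodes, block by block, a function $g_n$ on length-$n$ strings whose fibres the automaton checks; the nonterminating automaton only ever needs to compare the (bounded-length, since $|w| = n$) input against data it reads off the advice, while the fibre count grows with $n$. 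Verifying both directions will be a short induction on run length for the $L_N$ membership, and a direct counting argument (in the spirit of the $\{0^a1^{n-a}\}$ example from the introduction) for the $\equiv_{L_A,n}$ lower bound, invoked through the forward direction of the Myhill-Nerode theorem with advice.
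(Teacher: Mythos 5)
There is a genuine gap, and it lies at the heart of your strategy for the non-membership part. You propose to push the witness language $L_A$ out of $L_T(A)$ by arranging that $\equiv_{L_A,n}$ has unboundedly many classes and then invoking the easy direction of the Myhill--Nerode theorem with advice. But that obstruction is advice-independent, and it applies equally to the nonterminating model: if a nonterminating automaton with advice $A$ and state set $Q$ accepts $L$, then two length-$n$ inputs that lead to the same state are $\equiv_{L,n}$-equivalent (the continuations of the two runs read the same advice positions and the same suffixes, including the $\Box$-tail), so $\equiv_{L,n}$ has at most $|Q|$ classes for every $n$. Equivalently, by Proposition~\ref{prop:eqmods}, any $L\in L_N(A)$ is terminating regular with \emph{some} advice $B$, hence already satisfies the uniform bound. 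So the "tension" you flag is not a delicate balancing act to be resolved by a cleverer predicate -- it is an outright contradiction: no language with unbounded $\equiv_{L,n}$ index can lie in $L_N(A)$ at all, so a Myhill--Nerode counting argument can never separate $L_T(A)$ from $L_N(A)$ for a fixed $A$. Your pivot away from the unary alphabet (made precisely in order to get many classes per length) is where the plan breaks.

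Your opening intuition, however, is the right one, and your first candidate language is essentially the paper's witness: the separation must exploit the fixed advice $A$ itself, namely that after the input ends a terminating automaton cannot see the next advice symbol while a nonterminating one can. The paper takes $A\in\{0,1\}^\omega$ with $\text{Pref}(A)$ not regular (without advice), and $L=\{0^n : A_{n+1}=1\}$ over the unary alphabet. Membership in $L_N(A)$ is immediate (read the blank, branch on the next advice bit, loop in an accepting or rejecting sink). The non-membership argument is not a counting argument but a reduction tied to $A$: if a terminating automaton $M$ accepted $L$ with advice $A$, then the accept/reject status of its state after $n$ steps predicts $A_{n+1}$, and one can build an ordinary finite automaton (no advice) that reads a candidate string, uses it as the advice for the simulation of $M$, and checks each symbol against the prediction; this automaton would recognize $\text{Pref}(A)$, contradicting the choice of $A$. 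To repair your proposal you would need to replace the $\equiv_{L,n}$-counting step with an argument of this kind, i.e.\ one that uses the complexity of the specific advice $A$ rather than a bound that every advice-regular language (in either model) obeys.
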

\begin{proof}
Let $A$ be an infinite string on the binary alphabet $\Gamma = \{0,1\}$ such that $\text{Pref}(A)$ is not regular (without advice). Let $L$ be the language on the unary alphabet $\Sigma = \{0\}$ consisting of those strings of length $n$ such that the $(n+1)^{\text{st}}$ character of $A$ is a $1$. 

The language $L$ is easily seen to be nonterminating regular with advice $A$. Indeed, let $M$ be a machine which loops until it reads a blank. At that point, if the advice character is a $1$, it transitions to an infinite loop at a state $q$ with $\{q\} \in F$. If not, it transitions to an infinite loop at a state $q'$ with $\{q'\} \not \in F$.

But $L$ is not terminating regular with advice $A$. The intuition is that when the input string ends, the automaton cannot guess the next advice character. Suppose for contradiction that $M = (Q,\Sigma,\Gamma,A,\delta,q_0,F)$ is a machine in the terminating model recognizing $L$ with advice $A$. Then we can construct a machine $M' = (Q',\Gamma,\delta',q_0',F')$ recognizing $\text{Pref}(A)$ (without advice). 

Let $Q' = Q\cup\{r\}$, where $r$ is a single new state, $F' = Q\subset Q'$, and $q_0' = q_0$. Now define $\delta':Q'\times\Gamma\rightarrow Q'$ as follows:
$$\delta'(q,a) = \begin{cases}\delta(q,a,0) & \text{if } q\in F \text{ and } a = 1 \text{ or if } q \in Q\setminus F \text{ and } a = 0\\r & \text{if } q = r \text{ or if } q\in F\text{ and } a = 0 \text{ or if } q\in Q\setminus F\text{ and } a = 1 \end{cases}$$

If a state $q$ was an accepting state of $M$, it expects the next character of $A$ to be a $1$, and if $q$ was a rejecting state of $M$, it expects the next character of $A$ to be a $0$. Then $M'$ recognizes $\text{Pref}(A)$, contradicting our assumption on the complexity of $A$.
\end{proof}

\section{Tree automata with advice} \label{sec:trees}

The classical Myhill-Nerode theorem has a natural generalization to (leaf-to-root deterministic) tree automata \cite{Kozen92}. Our refinement can be easily adapted to this setting, giving a characterization of the sets of labeled trees which are regular with advice. 

\begin{definition} 
A \emph{(finite binary) tree} $t\subset \{0,1\}^*$ is a finite set of binary strings, called \emph{positions}, which is downwards closed, in the sense that if $w\in t$ and $v$ is an initial segment of $w$, then $v\in t$. A \emph{labeled tree} is a pair $(t,l)$, where $t$ is a tree and $l$ is a function $t\rightarrow \Sigma$ for some finite alphabet $\Sigma$.

The \emph{root} of a nonempty tree is the empty string $\lambda$. The \emph{children} of a position $w\in t$ are the positions $w0$ and $w1$.

Given a tree $t$ and a position $w\in t$ such that $w$ has a child $w'\notin t$, we call the child $w'$ a \emph{graft site}. We also consider the empty string $\lambda$ to be a graft site of the empty tree $\varepsilon$. If $u$ is a graft site of $t$, then for any other tree $x$ we define a new tree $t|_ux = t\cup\{uw\, |\, w\in x\}$. If $t$ and $x$ are labeled from $\Sigma$ by $l_t$ and $l_x$, then $t|_ux$ is also labeled from $\Sigma$: $l(w) = l_t(w)$ for $w\in t$ and $l(uw) = l_x(w)$ for $w\in x$.

An \emph{advice tree} is a labeling of the complete binary tree $\{0,1\}^*$ by a finite advice alphabet $\Gamma$, that is, a function $A:\{0,1\}^*\rightarrow \Gamma$ .

\end{definition}

A tree automaton operates on a labeled tree $t$ by assigning to all positions not in the tree a prescribed initial state. It then works inductively toward the root of the tree, assigning a state to each position in the tree based on the two states assigned to the children of that position and the label at that position. The automaton accepts if an accepting state is assigned to the root. A tree automaton with advice additionally has access to the advice character $A(u)$ when assigning a state to position $u \in t$. 

\begin{definition}
A tree automaton with advice is a tuple $M = (Q,\Sigma,\Gamma,A,\delta,q_0,F)$.
\begin{enumerate}
\item $Q$ is the finite set of \emph{states} of the automaton.
\item $\Sigma$ is a finite set of symbols called the \emph{input alphabet}.
\item $\Gamma$ is a finite set of symbols called the \emph{advice alphabet}.
\item $A:\{0,1\}^*\rightarrow \Gamma$ is the \emph{advice tree}.
\item $\delta: Q\times Q\times \Gamma\times\Sigma\to Q$ is the \emph{transition function}.
\item $q_0\in Q$ is the \emph{initial state}.
\item $F \subseteq Q$ is the \emph{acceptance condition}.
\end{enumerate}

The \emph{run} of $M$ on a labeled tree $(t,l)$ is an assignment $r:\{0,1\}^*\rightarrow Q$ of a state to each position in the complete binary tree such that if $w\notin t$, $r(w) = q_0$, and if $w\in t$, $r(w) = \delta(r(w0),r(w1),A(w),l(w))$. Since $t$ is finite, there is a unique such assignment. 

A labeled tree $(t,l)$ is \emph{accepted} by $M$ if $r(\lambda) \in F$. A set of labeled trees $T$ is {\em tree regular with advice} if it is the set accepted by some tree automaton with advice.
\end{definition}

\begin{remark}\label{rem_ntm}
We have just introduced a terminating model. Of course there is also a non-terminating model (in this case the automaton is non-deterministic, starts at the root, and a run is successful if every infinite path satisfies a Muller condition). Just as in string case
the two models are equivalent if we don't fix the advice (cf. Proposition~\ref{prop:eqmods}) and not necessarily equivalent if we do fix the advice (cf. Proposition~\ref{prop:TvsNT}).
\end{remark}

Let $\Sigma$ be a finite alphabet and $T$ a set of trees labeled from $\Sigma$. Define an equivalence relation $\equiv_T$ on the set of all trees labeled from $\Sigma$ by $x\equiv_T y$ if for any labeled tree $t$ and any graft site $u$ of $t$, it holds that $t|_ux \in T \iff t|_uy \in T$.

\begin{theorem}[Myhill-Nerode theorem for trees]  \cite{Kozen92}
A set of labeled trees $T$ is tree regular if and only if $\equiv_T$ has finitely many equivalence classes.
\end{theorem}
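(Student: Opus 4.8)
The plan is to mimic the classical Myhill--Nerode argument for strings, with the operation ``append a suffix'' replaced by the grafting operation $x \mapsto t|_u x$, and the transition monoid replaced by the bottom-up action of trees on states.

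For the forward direction, suppose $T$ is accepted by a (leaf-to-root deterministic) tree automaton $M = (Q,\Sigma,\delta,q_0,F)$ without advice. For a labeled tree $x$, let $\rho(x)\in Q$ be the state assigned to the root $\lambda$ in the run of $M$ on $x$ (viewed as a standalone tree, so every position outside $x$ gets $q_0$). The key observation is that for any labeled tree $t$ and graft site $u$ of $t$, the run of $M$ on $t|_u x$ assigns to each position $uw$ with $w\in x$ exactly the state that the run of $M$ on $x$ assigns to $w$; in particular it assigns $\rho(x)$ to the position $u$. Since $M$ is deterministic and works bottom-up, the states it then assigns to the positions of $t$ --- in particular to $\lambda$ --- depend only on $\rho(x)$, not on $x$. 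Hence $\rho(x)=\rho(y)$ implies $t|_u x\in T \iff t|_u y\in T$ for all $t,u$, i.e.\ $x\equiv_T y$; so $\equiv_T$ has at most $|Q|$ classes.

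For the converse, suppose $\equiv_T$ has finitely many classes, and build the ``minimal'' automaton directly from the quotient: let $Q$ be the finite set of $\equiv_T$-classes $[x]$, set $q_0=[\varepsilon]$ and $F=\{[x] : x\in T\}$, and define $\delta([x_0],[x_1],a)=[z]$ where $z$ has root labeled $a$ with left subtree $x_0$ and right subtree $x_1$. First I would record two structural facts about grafting: grafts compose, $s|_v(t|_u x) = (s|_v t)|_{vu} x$ with $vu$ again a graft site of $s|_v t$, and grafts at incomparable positions commute. Using the composition fact one shows $\equiv_T$ is a congruence for grafting: $x\equiv_T x'$ implies $t|_u x\equiv_T t|_u x'$ for every $t$ and graft site $u$. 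From this, together with commutation (to replace the two subtrees of the root one at a time, passing through the empty tree when a child is absent), it follows that $\delta$ and $F$ are well defined on classes. A straightforward structural induction then shows that in the run of $M$ on a labeled tree $t$, the state at every position $w\in t$ is $[t_w]$, where $t_w$ is the subtree of $t$ rooted at $w$; in particular $r(\lambda)=[t]$, so $M$ accepts $t$ iff $[t]\in F$ iff $t\in T$.

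The main obstacle is the bookkeeping around graft sites: verifying that $t|_u x\equiv_T t|_u x'$ follows from $x\equiv_T x'$ (where graft composition is needed), and verifying well-definedness of $\delta$ (which additionally needs legitimacy of replacing the root's subtrees one at a time, handled by commutation of grafts at the incomparable positions $0$ and $1$). None of this is deep, but it is exactly where the tree setting differs from the string setting and where a careless argument goes wrong; the finiteness count in the forward direction and the structural induction in the converse are routine.
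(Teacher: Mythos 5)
Your proof is correct. Note that the paper does not actually prove this classical statement---it cites Kozen and only proves the advice variant---but your argument is essentially the same as that proof, specialized to the advice-free setting: the forward direction is the same counting argument (the state assigned at the graft position depends only on the grafted subtree, so the number of $\equiv_T$-classes is bounded by $|Q|$), and your verification that $\delta$ is well defined on classes via graft composition and the one-subtree-at-a-time replacement through commuting grafts at the incomparable positions $0$ and $1$ is exactly the chain of equivalences the paper carries out for $\equiv_{T,v}$. The only genuine difference is in the converse construction: without advice you can build the quotient automaton directly on the finitely many $\equiv_T$-classes, whereas the paper, working position by position with $\equiv_{T,v}$, must instead fix a bound $k$, reuse $k$ states, and encode the per-position transition functions $h_v$ into the advice tree; your route is the standard classical one and is the right choice here.
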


The definition must be modified in the presence of advice: Given a position $v\in \{0,1\}^*$, define the equivalence relation $\equiv_{T,v}$ on the set of all labeled trees by $x\equiv_{T,v} y$ if for any labeled tree $t$ such that $v$ is a graft site of $t$, it holds that $t|_vx \in T \iff t|_vy \in T$.

\begin{theorem}[Myhill-Nerode theorem for trees with advice]
A set of labeled trees $T$ is tree regular with advice if and only if there is some 
$k \in \mathbb{N}$ such that for all $v\in \{0,1\}^*$, $\equiv_{T,v}$ has at most $k$ equivalence classes.
\end{theorem}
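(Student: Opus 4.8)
The plan is to adapt the proof of the string version, with the ``state reached after reading a length-$n$ prefix'' replaced by the ``state a run assigns at a graft site $v$ when a subtree is plugged in there''. For the forward direction, suppose $T$ is accepted by a tree automaton with advice $M=(Q,\Sigma,\Gamma,A,\delta,q_0,F)$ and put $k=|Q|$. Fix $v\in\{0,1\}^*$ and, for a labeled tree $x$, let $\rho_v(x)\in Q$ be the state that the run of $M$ on $x$ assigns to the root when $M$ uses the shifted advice tree $w\mapsto A(vw)$. Because $t$ is downward closed and $v\notin t$, the positions of $t|_vx$ lying at or below $v$ are exactly $\{vw:w\in x\}$, so the run of $M$ on $t|_vx$, restricted to the subtree rooted at $v$, is precisely the run of $M$ on $x$ with the shifted advice; in particular it assigns $\rho_v(x)$ to $v$, independently of $t$. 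Hence the state the run of $M$ on $t|_vx$ assigns to the root of $t|_vx$ depends on $x$ only through $\rho_v(x)$, so $\rho_v(x)=\rho_v(y)$ forces $t|_vx\in T\iff t|_vy\in T$ for every $t$ with graft site $v$, i.e.\ $x\equiv_{T,v}y$. Thus $\equiv_{T,v}$ has at most $|Q|=k$ classes, uniformly in $v$.

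For the converse, suppose such a $k$ exists and write $\mathcal{C}_v$ for the (at most $k$) classes of $\equiv_{T,v}$. The first step is the bottom-up analogue of the functions $h_n$ in the string proof: if $x_0\equiv_{T,v0}y_0$, $x_1\equiv_{T,v1}y_1$ and $a\in\Sigma$, then the tree $\langle a,x_0,x_1\rangle$ with root labeled $a$ and left and right subtrees $x_0,x_1$ is $\equiv_{T,v}$-equivalent to $\langle a,y_0,y_1\rangle$. To see this, start from an arbitrary $t$ with graft site $v$, attach a new node labeled $a$ at $v$ (so that $v0,v1$ become graft sites), graft $x_1$ at $v1$ and then $x_0$ at $v0$, apply the two hypotheses in turn, and use that the grafts at $v0$ and at $v1$ commute. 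This yields a well-defined map $g_v:\mathcal{C}_{v0}\times\mathcal{C}_{v1}\times\Sigma\to\mathcal{C}_v$ sending the classes of the two subtrees to the class of the assembled tree. Grafting into the empty tree $\varepsilon$ also shows that each class of $\equiv_{T,\lambda}$ consists entirely of trees in $T$ (``accepting'') or entirely of trees not in $T$.

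Now build the automaton as in the string case. Take $Q=\{1,\dots,2k\}$, $F=\{1,\dots,k\}$, and choose for each $v$ an injection $\mathrm{code}_v:\mathcal{C}_v\hookrightarrow Q$ with $\mathrm{code}_v([\varepsilon]_{\equiv_{T,v}})=q_0$ for every $v$ and, for $v=\lambda$, with accepting classes mapped into $F$ and rejecting classes into $Q\setminus F$; set $q_0=1$ if $\varepsilon\in T$ and $q_0=k+1$ otherwise. Let $\Gamma$ have one letter for each function $Q\times Q\times\Sigma\to Q$, put $\delta(p,q,c,a)=f_c(p,q,a)$, and for each position $v$ let $A(v)$ be a letter whose function agrees (through the chosen codings) with $g_v$ on the codes of actual classes and is arbitrary elsewhere. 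A leaves-to-root induction then shows that the run of $M$ on $(t,l)$ assigns to each position $w$ the state $\mathrm{code}_w$ of the $\equiv_{T,w}$-class of the subtree of $(t,l)$ rooted at $w$ (with $\varepsilon$ when $w\notin t$); evaluating this at $\lambda$ gives $(t,l)\in T\iff r(\lambda)\in F$.

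The point needing care — and the main obstacle — is the interaction of graft sites with the empty tree. A run puts $q_0$ at every position outside $t$, so the empty-tree class must be coded by the same state $q_0$ at every $v$, and at the root this state must simultaneously lie on the correct side of the $F$ versus $Q\setminus F$ split. The choice of $q_0$ above (namely $q_0\in F$ exactly when $\varepsilon\in T$) reconciles these two demands, exactly as in the string proof; once that is checked, the remaining verifications are routine.
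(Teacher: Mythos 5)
Your proof is correct and follows essentially the same route as the paper's: bound the index of $\equiv_{T,v}$ by the number of states via the state assigned at the graft site, then conversely build the automaton from the bottom-up class maps $g_v$ (the paper's $h_v$, proved by the same graft-commutation computation) encoded letter-by-letter into the advice tree. The only cosmetic difference is that you carry over the $2k$-state, $F=\{1,\dots,k\}$ bookkeeping from the string case, whereas the paper observes that $\equiv_{T,\lambda}$ has at most two classes (acceptance is only tested at the root), so $k$ states with $F=\{1\}$ already suffice.
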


The idea of the proof is the same as in the finite string case. If a set of labeled trees is regular with advice, the number of $\equiv_{T,v}$-classes must be bounded by the number of states. Conversely, given a uniform bound $k$, we may construct an automaton by associating a state to each equivalence class and encoding the transition information into the advice tree. 

We include the proof for completeness.

\begin{proof}
Suppose $T$ is tree regular with advice. Let $M=(Q,\Sigma,\Gamma,A,\delta,q_0,F)$ be the tree automaton recognizing $T$, and let $k = |Q|$. Now assume that for some position $v$, $\equiv_{T,v}$ divides the set of labeled trees into $n$ equivalence classes, with $n>k$. Pick representative labeled trees $x_1,\hdots,x_n$ in these classes. Let $t$ be a labeled tree with graft site $v$, and let $q_i\in Q$ be the state associated to the position $v$ (which is the root of $x_i$) in the run of $M$ on $t|_vx_i$. Note that this is independent of the choice of $t$. Since $n>|Q|$, $q_i = q_j$ for some $i\neq j$. Then for any tree $t$ with graft site $v$, $M$ accepts $t|_vx_i$ if and only if $M$ accepts $t|_vx_j$, since the states of $M$ assigned to positions in the base tree $t$ only depends on the states assigned to positions in the grafted tree based on which state is assigned to the string $v$. This contradicts the assumption that $x_i$ and $x_j$ are representatives of distinct equivalence classes.

Conversely, suppose we have such a bound $k$. For each position $v$, let $\mathcal{C}_v$ be the collection of equivalence classes of $\equiv_{L,v}$, so $|\mathcal{C}_v|\leq k$. Note that $\mathcal{C}_\lambda$ consists of at most two classes. Since the empty tree $\varepsilon$ is the only tree with graft site $\lambda$, and $\varepsilon |_\lambda t = t$, $s \equiv_{T,\lambda} t$ means that $s$ and $t$ are both in $T$ or both not in $T$.

We must construct an automaton $M$ recognizing $T$. Let $Q = \{1,\dots,k\}$ be the set of states, and let $F = \{1\}\subset Q$. Set $q_0 = 1$ if the empty tree is in $T$ and $q_0 = 2$ otherwise. For each position $v$, we will associate one state to each equivalence class in $\mathcal{C}_v$. This can be done arbitrarily, except for two requirements: 
\begin{itemize}
\item The state ($1$ or $2$) named $q_0$ must always be associated to the equivalence class containing the empty tree.
\item For $\mathcal{C}_\lambda$, the state $1$ must be associated to the class consisting of those trees in $T$.
\end{itemize}

For each input character $a\in \Sigma$, we may form the singleton labeled tree consisting of just the root position labeled by $a$: $t_a = (\{\lambda\}, \lambda\rightarrow a)$. Now $t_a$ has two graft sites, $0$ and $1$. If $s$ and $t$ are labeled trees, we can form the tree $t_{(s,a,t)} = (t_a|_0 s)|_1 t$. Now for any position $v$, if $s \equiv_{T,v0} s'$ and $t\equiv_{T,v1} t'$, then $t_{(s,a,t)} \equiv_{T,v} t_{(s',a,t')}$. Indeed, for any labeled tree $x$ with graft site $v$, \begin{eqnarray*}x|_vt_{(s,a,t)} \in T &\iff& ((x|_vt_a)|_{v0} s)|_{v1} t \in T\\ &\iff& ((x|_vt_a)|_{v0} s)|_{v1} t' \in T\\ &\iff& ((x|_vt_a)|_{v1} t')|_{v0} s \in T \\ &\iff& ((x|_vt_a)|_{v1} t')|_{v0} s' \in T\\ &\iff& x|_vt_{(s',a,t')} \in T.\end{eqnarray*}

This defines a function $h_v: \mathcal{C}_{v0} \times \mathcal{C}_{v1} \times \Sigma \rightarrow \mathcal{C}_v$ so that if $C\in \mathcal{C}_{v0}$, $C'\in \mathcal{C}_{v1}$ and $s\in C$, $t\in C'$, then $t_{(s,a,t)}$ is a member of the class $h_v(C,C',a)$. We now wish to encode this transition information into the advice tree.

Enumerate all functions $Q\times Q\times \Sigma \rightarrow Q$ by $\langle f_i\rangle _{i = 1}^N$, where $N = k^{k^2|\Sigma|}$, and let $\Gamma = \langle c_i\rangle_{i = 1}^N$ be the advice alphabet. Each character codes a possible transition behavior. For each position $v$, pick a function $f_i$ which respects $h_v$ in the sense that if states $j$ and $j'$ are associated to classes $C\in \mathcal{C}_{v0}$ and $C'\in \mathcal{C}_{v1}$, then for any character $a\in \Sigma$, $f_i(j,j',a)$ is the state associated to $h_n(C,C',a)$. Since not every state is associated to a class, $f_i$ may behave arbitrarily on some inputs. Set $A(v) = c_i$. 

Finally define the transition function $\delta:Q\times Q\times\Gamma\times \Sigma\rightarrow Q$ by $\delta(j,j',c_i,a) = f_i(j,j',a)$. It is easy to check that $M$ accepts the labeled tree $t$ if and only if $t\in T$.
\end{proof}

\bibliography{advicebib}{}
\bibliographystyle{eptcs}

\end{document}